\documentclass[final]{siamart1116}



\usepackage{lipsum}
\usepackage{amssymb,amsfonts}
\usepackage{mathtools}
\usepackage{float}
\usepackage{tikz}
\usepackage{bm}
\usepackage[caption=false]{subfig}
\usepackage{dsfont}
\usepackage{graphicx}
\usepackage{enumerate}
\usepackage{algpseudocode}
\usepackage{soul}
\usepackage{algorithmicx}
\usepackage{anyfontsize}
\ifpdf
  \DeclareGraphicsExtensions{.eps,.pdf,.png,.jpg}
\else
  \DeclareGraphicsExtensions{.eps}
\fi

\algnewcommand{\LeftComment}[1]{\Statex \(\triangleright\) #1}

\captionsetup[subfigure]{labelformat=empty}

\setstcolor{blue}

\numberwithin{theorem}{section}

\newcommand{\TheTitle}{Symmetry-independent stability analysis of synchronization patterns}
\newcommand{\TheAuthors}{Y. Zhang and A. E. Motter}
\newcommand{\ShortTitle}{Symmetry-independent analysis of synchronization patterns}

\headers{\ShortTitle}{\TheAuthors}

\title{{\TheTitle}\thanks{Published version at \textit{SIAM Rev.} \textbf{62} pp. 817–836 (\url{https://doi.org/10.1137/19M127358X}).
\funding{This work was funded by ARO Grant No.~W911NF-19-1-0383.}}}

\author{
  Yuanzhao Zhang\thanks{
    Department of Physics and Astronomy, Northwestern University, Evanston, Illinois 60208, USA
    (\email{yuanzhao@u.northwestern.edu}).}
  \and
  Adilson E. Motter\thanks{
    Department of Physics and Astronomy and Northwestern Institute on Complex Systems, Northwestern University, Evanston, Illinois 60208, USA 
    (\email{motter@northwestern.edu}).}
}

\usepackage{amsopn}


\ifpdf
\hypersetup{
  pdftitle={\TheTitle},
  pdfauthor={\TheAuthors}
}
\fi


\externaldocument{ex_supplement}


\begin{document}

\maketitle

\begin{abstract}
  The field of network synchronization has seen tremendous growth following the introduction of the master stability function (MSF) formalism, which enables the efficient stability analysis of synchronization in large oscillator networks. 
  However, to make further progress we must overcome the limitations of this celebrated formalism, which focuses on global synchronization and requires both the oscillators and their interaction functions to be identical, while many systems of interest are inherently heterogeneous and exhibit complex synchronization patterns. 
  Here, we establish a generalization of the MSF formalism that can characterize the stability of any cluster synchronization pattern, even when the oscillators and/or their interaction functions are nonidentical.
  The new framework is based on finding the finest simultaneous block diagonalization of matrices in the variational equation and does not rely on information about network symmetry.
  This leads to an algorithm that is error-tolerant and orders of magnitude faster than existing symmetry-based algorithms. 
  As an application, we rigorously characterize the stability of chimera states in networks with multiple types of interactions.
\end{abstract}

\begin{keywords}
  dynamical systems, synchronization, symmetry, matrix $*$-algebra, simultaneous block diagonalization, chimera states
\end{keywords}

\begin{AMS}
  34C15,  
  35B36,  
  05C25,  
  05C82,  
  05C50   
\end{AMS}

\section{Introduction}
\label{sec:intro}

Coupled oscillator networks have been extensively studied as a fundamental model of collective behavior in complex systems \cite{strogatz2000kuramoto,boccaletti2002synchronization,newman2003structure,boccaletti2006complex,arenas2008synchronization,cao2013overview,abrams2016introduction}.
The field is unique in its close interaction between theoretical developments \cite{pecora1990synchronization,rosenblum1996phase,timme2002prevalence,ott2008low,aguiar2011dynamics} and practical applications \cite{wiesenfeld1996synchronization,feki2003adaptive,belykh2005synchronization,li2010consensus,motter2013spontaneous}.
A central theme of current research is how to characterize the stability of increasingly complex synchronization patterns in arbitrary network structures.
Such patterns can be regarded as forms of cluster synchronization, in which the oscillators form two or more internally synchronized clusters that exhibit mutually distinct dynamics \cite{belykh2001cluster,golubitsky2005patterns,belykh2008cluster,dahms2012cluster,rosin2013control,williams2013experimental,orosz2014decomposition,hart2017experiments}.
The stability of a synchronization pattern is important because it usually cannot be observed if it is unstable, and thus a bifurcation leading to the loss or restoration of stability has significant ramifications in various biological and technological systems.

In order to perform efficient stability analysis for large networks of coupled oscillators, the key is to first divide the full state space of the variational equation into minimal flow-invariant subspaces (defined below) and then calculate the maximum Lyapunov exponent in each flow-invariant subspace to determine whether perturbations within that subspace will grow.
To achieve this for the global synchronization of identical oscillators, the master stability function (MSF) formalism \cite{pecora1998master} finds a coordinate transformation that diagonalizes the coupling matrix, which in turn decouples the high-dimensional variational equation of the full network into low-dimensional equations describing the evolution of independent perturbation modes. 
The full equation has a dimension that grows linearly with the network size, while the decoupled equations all have a fixed dimension equal to that of an individual oscillator, irrespective of the network size.

However, when one considers cluster synchronization states, nonidentical oscillators, or disparate interactions, all of which are common in real systems, there are in general two or more noncommuting matrices in the variational equation.
Since noncommuting matrices cannot be diagonalized simultaneously (even when the individual matrices can), the MSF formalism is not applicable to these cases.
The goal of the current paper is to introduce an extension of the MSF formalism and propose a fundamentally new framework based on the theory of matrix $*$-algebra that addresses these important cases.
In particular, we present a highly scalable algorithm that finds the finest simultaneous block diagonalization (SBD) of any given set of self-adjoint matrices, leading to an optimal separation of the perturbation modes and efficient stability analysis of arbitrary synchronization patterns.
The finest SBD has a common block structure characterized by the largest number of blocks and is unique up to block permutations.

Our framework applies to the general class of network dynamical systems described by
\begin{equation}
  \dot{\bm{x}}_i = \bm{F}_i(\bm{x}_i) + \sum_{r=1}^{R} \sigma_r \sum_{j=1}^{n} \bm{C}_{r}(i,j) \bm{H}_r(\bm{x}_i,\bm{x}_j),\quad i=1,\dots,n,
  \label{eq:general}
\end{equation}
where $\bm{x}_i$ is the $d$-dimensional state vector of the $i$th oscillator, $n$ is the number of oscillators, $R$ is the number of interaction types, and the overdot represents the time derivative. 
Here, $\bm{F}_i: \mathbb{R}^d \rightarrow \mathbb{R}^d$ 
is the vector field governing the uncoupled dynamics of the $i$th oscillator and $\bm{C}_r$ is a self-adjoint coupling matrix representing interactions of the form $\bm{H}_r$ and strength $\sigma_r$.
The synchronization patterns we study can be derived from any balanced equivalence relation \cite{golubitsky2005patterns,kamei2013computation}, which is the most general class of patterns for which oscillators in the same cluster can admit equal dynamics for generic $\bm{F}_i$ and $\bm{H}_r$. 
In general, nodes in a cluster can be separated by nodes from other clusters and do not necessarily form a connected component of the network. 
The resulting synchronization patterns describe a wide range of network dynamics, including remote synchronization and chimera states.\footnote{For ease of presentation, we ground our discussions in this paper on \cref{eq:general}, but we note that it is straightforward to generalize our methods beyond ODE settings.
For instance, it is possible to introduce coupling delay into the interaction functions and the resulting delay differential equations can still be analyzed within our framework.
Naturally, the framework also applies to discrete-time dynamical systems.}

A related extension of the MSF formalism to study cluster synchronization patterns was previously proposed by Pecora and colleagues \cite{pecora2014cluster}.
That framework was originally developed for networks with adjacency-matrix coupling and has since been extended to diffusively coupled networks \cite{sorrentino2016complete} and multilayer networks \cite{blaha2019cluster}.
In those studies, the authors simplified the stability analysis using the machinery of irreducible representations (IRRs) \cite{tinkham2003group}, which decouples the variational equation according to symmetries present in the system.

Both the IRR framework and our SBD framework reduce to the MSF formalism for global synchronization of identical oscillators with a single type of interaction.
The key difference between the two frameworks is that the former relies on network symmetry to perform stability analysis, whereas the latter does not.\footnote{We note that it is computationally inexpensive to identify synchronization patterns when compared to the cost of determining their stabilities. We thus assume that the synchronization patterns of interest are given before stability analyses are performed.}
As a result, the IRR framework has to resort to ad hoc modifications when a cluster synchronization pattern is not induced by network symmetry \cite{sorrentino2016complete,siddique2018symmetry}.
In contrast, our SBD framework does not require any symmetry information to be known in advance and is directly applicable to all cluster synchronization patterns.
Moreover, it forgoes the calculation on IRRs of network symmetry, which becomes computationally prohibitive very quickly as the number of symmetries grows.
This leads to a faster, simpler, and more robust algorithm than existing ones based on the IRR framework and enables the study of complex synchronization patterns in large networks.

The paper is organized as follows. 
In \cref{sec:star-algebra}, we present the concept of matrix $*$-algebra and a fast algorithm for finding the finest SBD for any set of self-adjoint matrices. 
Then, in \cref{sec:cluster}, we develop a symmetry-independent framework for the stability analysis of arbitrary cluster synchronization patterns and compare our algorithm with state-of-the-art algorithms based on IRRs. 
We further show in \cref{sec:unified} that our algorithm can be applied to analyze cluster synchronization patterns of nonidentical oscillators and oscillators with multiple types of interactions. 
The strength of this unified framework is demonstrated with the characterization of permanently stable chimera-like states in multilayer networks.
A discussion on open problems and future directions is presented in \cref{sec:directed}.

\section{Finest simultaneous block diagonalization}
\label{sec:star-algebra}

Given a set of $n \times n$ matrices $\mathcal{B} = \{\bm{B}_1,\cdots,\bm{B}_K\}$, we say that a subspace $\mathcal{W}$ of $\mathbb{C}^n$ is {\it invariant} under $\mathcal{B}$ if $\bm{B}_k\mathcal{W} \subseteq \mathcal{W}$ for every $\bm{B}_k \in \mathcal{B}$. 
Further, an invariant subspace $\mathcal{W}$ is {\it minimal} if no proper subspace of $\mathcal{W}$ other than $\bm{0}$ is invariant under $\mathcal{B}$.
An invertible matrix $\bm{T}$ is said to give the {\it finest SBD} of the matrix set $\mathcal{B}$ if it brings all matrices in $\mathcal{B}$ into a common block-diagonal form that cannot be further refined.
Equivalently, $\bm{T}$ decomposes $\mathbb{C}^n$ into minimal invariant subspaces under $\mathcal{B}$, such that the $j$th common blocks in $\bm{T}^{-1}\mathcal{B}\bm{T}$ only have $\bm{0}$ and $\mathbb{C}^{n_j}$ as invariant subspaces, where $n_j$ is the dimension of the $j$th blocks.

To make progress in finding the finest SBD, it is beneficial to consider an algebraic structure called matrix $*$-algebra. 
Letting $\mathcal{M}_n$ denote the set of all $n \times n$ matrices with complex entries, a subset $\mathcal{T}$ of $\mathcal{M}_n$ is said to be a {\it matrix $*$-algebra} over $\mathbb{C}$ if the identity matrix $\bm{I}_n$ belongs to $\mathcal{T}$ and
\begin{equation}
  \bm{B},\,\bm{C} \in \mathcal{T};\,\alpha,\,\beta \in \mathbb{C} \implies \alpha\bm{B}+\beta\bm{C},\,\bm{BC},\,\bm{B}^*
  \in \mathcal{T},
  \label{eq:def}
\end{equation}
where $*$ denotes conjugate transpose.\footnote{The results remain applicable if the matrix $*$-algebras are over $\mathbb{R}$, as in various examples considered throughout the paper. However, working in $\mathbb{C}$ both allows complex coupling matrices, which arise for oscillator networks that are naturally expressed using complex vector fields (such as coupled Stuart--Landau oscillators), and can lead to finer block structures when networks are directed.} 
Matrix $*$-algebras enjoy better properties than matrix algebras because they are closed under the conjugate transpose operation. 
This makes matrix $*$-algebras semisimple and thus characterizable by the Artin--Wedderburn theorem \cite{lam2013first}.

According to structure theorems based on the Artin--Wedderburn theorem (Theorems~3.1 and 6.1 in \cite{murota2010numerical}), a matrix $*$-algebra $\mathcal{T}$ can always be decomposed through a unitary transformation $\bm{P}$ into a direct sum of $\ell$ irreducible matrix $*$-algebras $\mathcal{T}_j$, 
\begin{equation}
  \bm{P}^*\mathcal{T}\bm{P} = \bigoplus_{j=1}^{\ell} \left( \bm{I}_{m_j} \otimes \mathcal{T}_j \right) = \text{diag}\{ \bm{I}_{m_1} \otimes \mathcal{T}_1, \cdots, \bm{I}_{m_\ell} \otimes \mathcal{T}_\ell \},
  \label{eq:A-W}
\end{equation}
where $\mathcal{T}_j \subseteq \mathcal{M}_{n_j}$, $m_j$ is the multiplicity of $\mathcal{T}_j$, and $\sum_{j=1}^\ell n_jm_j = n$. 
The $\otimes$ symbol denotes the tensor product of matrices (i.e., the Kronecker product), and the summand $\bm{I}_{m_j} \otimes \mathcal{T}_j = \{ \bigoplus_{k=1}^{m_j} \bm{B}: \bm{B} \in \mathcal{T}_j \}$ represents $m_j$ copies of the irreducible matrix $*$-algebra $\mathcal{T}_j$ arranged diagonally (not to be confused with $\bigoplus_{k=1}^{m_j} \mathcal{T}_j$).
We say that a matrix $*$-algebra $\mathcal{T}_j$ is {\it irreducible} if it contains matrices that only share trivial invariant subspaces (i.e., $\bm{0}$ and $\mathbb{C}^{n_j}$). 
\Cref{eq:A-W} is the canonical form of an irreducible decomposition of a matrix $*$-algebra, which is unique up to block permutations and unitary transformations within each block. 
As a consequence, all the matrices in $\mathcal{T}$ can be transformed simultaneously into a block-diagonal form of $\sum_{j=1}^\ell m_j$ blocks through a single unitary matrix $\bm{P}$.

A matrix $*$-algebra $\mathcal{T}$ is said to be {\it generated by} a set of matrices $\mathcal{B}$ if $\mathcal{B} \subseteq \mathcal{T}$ and every matrix in $\mathcal{T}$ can be constructed from $\bm{I}_n$ and $\mathcal{B}$ using the operations of matrix $*$-algebras (i.e., scalar multiplication, matrix addition, matrix multiplication, and conjugate transpose).
In order to calculate a transformation matrix $\bm{P}$ that gives the finest SBD of all matrices in $\mathcal{T}$, we propose a new algorithm (\cref{alg:1}) and refer to the corresponding coordinate transformation as an SBD transformation.
The algorithm involves only numerical linear-algebraic calculations and does not require any algebraic structure (e.g., symmetries) to be known in advance.

\begin{algorithm}
  \caption{
  Finding an SBD transformation for a matrix $*$-algebra generated by a set of $n\times n$ matrices $\mathcal{B} = \{\bm{B}_k\}$ in $\mathcal{O}(n^3)$. (A MATLAB implementation is available as a GitHub repository at \url{https://github.com/y-z-zhang/net-sync-sym/}.)
  }

  \begin{algorithmic}[1]

    \LeftComment{step 1: generate a self-adjoint matrix $\bm{B}$ by combining matrices in $\mathcal{B}$ with random coefficients $c_k$ and $d_k$}

    \State $\bm{B} = \sum_{k=1}^K [c_k (\bm{B}_k + \bm{B}_k^*) + \mathrm{i} d_k (\bm{B}_k - \bm{B}_k^*)]$

    \State find the eigenvectors $\{\bm{v}_j\}$ of $\bm{B}$

    \LeftComment{steps 3 to 9: find (a basis of) the minimal invariant subspace that contains $\bm{v}_1$}

    \State perform Gram--Schmidt orthonormalization on $\{\bm{v}_1, \bm{B}_k\bm{v}_1, \bm{B}_k^*\bm{v}_1\}$, $k = 1,\dots,K$, to obtain a set of orthonormal vectors $\mathcal{V}$

    \State let $\bm{v}$ be a random linear combination of the vectors from $\mathcal{V}$

    \While{the images $\{\bm{B}_k\bm{v}, \bm{B}_k^*\bm{v}\}$, $k = 1,\dots,K$ include vectors that are linearly independent from $\mathcal{V}$}
      \State make these new vectors orthonormal to $\mathcal{V}$ and to each other 
      \State expand $\mathcal{V}$ to include the new vectors
      \State let $\bm{v}$ be a random combination of the vectors from the expanded $\mathcal{V}$
    \EndWhile

    \State let $\bm{P}$ be a matrix whose columns are made of vectors from $\mathcal{V}$

    \LeftComment{steps 11 to 16: find the rest of the minimal invariant subspaces}

    \While{the matrix $\bm{P}$ has less than $n$ columns}
      \State find an eigenvector $\bm{v}_j$ outside the span of $\bm{P}$'s column vectors
      \State make $\bm{v}_j$ orthonormal to the column vectors of $\bm{P}$
      \State repeat steps 3 to 9 with $\bm{v}_1$ replaced by $\bm{v}_j$
      \State add the vectors from $\mathcal{V}$ to $\bm{P}$ as additional columns
    \EndWhile

  \end{algorithmic}
  \label{alg:1}
\end{algorithm}

The idea behind the algorithm is simple. First, pick an eigenvector $\bm{v}_1$ of a self-adjoint matrix $\bm{B} = \sum_{k=1}^{K} [c_k (\bm{B}_k + \bm{B}_k^*) + \mathrm{i} d_k (\bm{B}_k - \bm{B}_k^*)]$, where $c_k$ and $d_k$ are random coefficients drawn from a Gaussian distribution. 
This eigenvector lies inside one of the minimal invariant subspaces of $\mathcal{B}$ with probability $1$.
Furthermore, all the images of $\bm{v}_1$ under $\{\bm{B}_k\}$ and $\{\bm{B}_k^*\}$ must also be inside the same minimal invariant subspace.
By running the Gram--Schmidt process on $\{\bm{v}_1,\bm{B}_1 \bm{v}_1, \bm{B}_1^* \bm{v}_1, \cdots, \bm{B}_K \bm{v}_1, \bm{B}_K^* \bm{v}_1\}$ and discarding the linearly redundant vectors, we can obtain a set of orthonormal vectors all inside the same minimal invariant subspace.
If these vectors span the entire invariant subspace, then we have discovered a common block and can repeat the process starting from another eigenvector of $\bm{B}$ outside the discovered minimal invariant subspace.
Otherwise, we generate a random linear combination $\bm{v}$ of the existing orthonormal vectors and ``explore'' the invariant subspace further by generating images of $\bm{v}$ under $\{\bm{B}_k\}$ and $\{\bm{B}_k^*\}$.
It is easy to see that a complete basis for a minimal invariant subspace can always be reached after no more than $n$ such iterations.
The computational complexity of the algorithm scales as $\mathcal{O}(n^3)$---it can easily handle $n \times n$ matrices with $n$ in the range of thousands.\footnote{It takes $\mathcal{O}(n^2)$ operations to calculate the image of $\bm{v}$ under $\bm{B}_k$, and $\mathcal{O}(n)$ such images need to be computed to discover the transformation matrix $\bm{P}$.}
This distinguishes \cref{alg:1} from the best competing algorithms available \cite{murota2010numerical,maehara2010numerical,maehara2011algorithm}, which run in $\mathcal{O}(n^4)$ time and are already slow at $n \approx 100$.
Moreover, applications of those algorithms to network synchronization have been limited to the study of global synchronization \cite{irving2012synchronization,zhang2017identical}, while our framework enables application of the new algorithm to general synchronization patterns.

In most cases, we are interested in a given set of matrices instead of the full matrix $*$-algebra. 
When is \cref{alg:1} guaranteed to find the finest SBD for a given matrix set $\mathcal{B}$?
A sufficient condition is that the matrices in $\mathcal{B}$ are self-adjoint.

\begin{proposition} 
Given a set of $n \times n$ self-adjoint matrices $\mathcal{B} = \{\bm{B}_1,\cdots,\bm{B}_K\}$, let $\mathcal{T}$ be the matrix $*$-algebra generated by $\mathcal{B}$. 
If a unitary matrix $\bm{P}$ leads to an irreducible decomposition of $\mathcal{T}$, then it also gives rise to the finest SBD of $\mathcal{B}$. 
\label{thm:1}
\end{proposition}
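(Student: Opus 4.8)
The plan is to show that the two notions—an irreducible decomposition of the $*$-algebra $\mathcal{T}$ and the finest SBD of the generating set $\mathcal{B}$—coincide precisely because $\mathcal{B}$ consists of self-adjoint matrices. Since $\mathcal{B} \subseteq \mathcal{T}$, any unitary $\bm{P}$ that realizes the canonical form \cref{eq:A-W} automatically brings every $\bm{B}_k$ into the common block-diagonal form with $\sum_{j=1}^\ell m_j$ blocks, so $\bm{P}$ certainly gives \emph{a} SBD of $\mathcal{B}$. The entire content of the proposition is therefore that this particular SBD is the \emph{finest} one, i.e.\ that each of its common blocks is a minimal invariant subspace of $\mathcal{B}$ and hence cannot be refined.

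The key step, and the place where self-adjointness is essential, is the observation that a subspace $\mathcal{W} \subseteq \mathbb{C}^n$ is invariant under $\mathcal{B}$ if and only if it is invariant under all of $\mathcal{T}$. The ``if'' direction is immediate from $\mathcal{B}\subseteq\mathcal{T}$. For ``only if'', I would note that because each $\bm{B}_k$ is self-adjoint we have $\bm{B}_k^* = \bm{B}_k$, so invariance of $\mathcal{W}$ under $\mathcal{B}$ already entails invariance under $\{\bm{B}_k^*\}$; invariance is then preserved under the remaining generating operations of a matrix $*$-algebra, since scalar multiples, sums, and products of operators that preserve $\mathcal{W}$ all preserve $\mathcal{W}$, and $\bm{I}_n\mathcal{W}=\mathcal{W}$. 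Consequently every element of $\mathcal{T}$ maps $\mathcal{W}$ into itself. This equivalence fails for general (non-self-adjoint) $\mathcal{B}$, since then $\mathcal{T}$ contains the adjoints $\bm{B}_k^*$ as genuinely new generators and may have strictly fewer invariant subspaces than $\mathcal{B}$ alone; this is exactly why self-adjointness is the stated sufficient condition.

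With the equivalence in hand, the minimal invariant subspaces of $\mathcal{B}$ are identical to those of $\mathcal{T}$, so it remains to verify that the canonical form \cref{eq:A-W} decomposes $\mathbb{C}^n$ into minimal invariant subspaces of $\mathcal{T}$. For this I would argue blockwise: the columns of $\bm{P}$ associated with a single copy of $\mathcal{T}_j$ inside the summand $\bm{I}_{m_j}\otimes\mathcal{T}_j$ span an $n_j$-dimensional subspace on which $\mathcal{T}$ acts exactly as $\mathcal{T}_j$, because every $\bm{B}\in\mathcal{T}_j$ is realized on that block by a suitable element of $\mathcal{T}$ (take the element of $\bm{P}^*\mathcal{T}\bm{P}$ whose $j$th summand is $\bm{I}_{m_j}\otimes\bm{B}$ and whose other summands vanish). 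Since $\mathcal{T}_j$ is irreducible, this subspace admits only $\bm{0}$ and itself as invariant subspaces, hence is minimal. Combining the two facts, each common block of $\bm{P}^{-1}\mathcal{B}\bm{P}$ is a minimal invariant subspace of $\mathcal{B}$, which is precisely the defining property of the finest SBD.

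I expect the main obstacle to be a matter of careful bookkeeping rather than conceptual depth: the delicate point is pinning down the blockwise irreducibility of \cref{eq:A-W} as a statement about invariant subspaces—one must check that restricting $\mathcal{T}$ to a single multiplicity copy really recovers all of the irreducible algebra $\mathcal{T}_j$, so that the block genuinely cannot be split—together with making explicit that it is self-adjointness, and nothing else, that forces the invariant subspaces of $\mathcal{B}$ and of $\mathcal{T}$ to agree. Everything else follows directly from the Artin--Wedderburn structure already quoted in \cref{eq:A-W}.
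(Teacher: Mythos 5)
Your proposal is correct and uses essentially the same argument as the paper: the crux in both is that, because each $\bm{B}_k$ is self-adjoint, closing $\mathcal{B}$ under the $*$-algebra operations introduces no new constraints, so $\mathcal{B}$ and $\mathcal{T}$ share the same (finest) decomposition. You phrase this as an equality of invariant-subspace lattices while the paper phrases it as the transfer of the finest common block structure from $\bm{T}^{-1}\mathcal{B}\bm{T}$ to $\bm{T}^{-1}\mathcal{T}\bm{T}$; these are equivalent formulations, and your extra verification that the summands of \cref{eq:A-W} are minimal invariant subspaces is a detail the paper absorbs into the definition of the irreducible decomposition.
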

\begin{proof}
Assume that an invertible matrix $\bm{T}$ yields the finest SBD of the set of self-adjoint matrices $\mathcal{B}$.
Since 
\[
  \bm{T}^{-1}(\alpha\bm{B}_j+\beta\bm{B}_k)\bm{T} = \alpha\bm{T}^{-1}\bm{B}_j\bm{T} + \beta\bm{T}^{-1}\bm{B}_k\bm{T},
\]
\[
  \bm{T}^{-1}\bm{B}_j\bm{B}_k\bm{T} = (\bm{T}^{-1}\bm{B}_j\bm{T})(\bm{T}^{-1}\bm{B}_k\bm{T}),
\]
\[
  \bm{T}^{-1}\bm{B}_j^*\bm{T} = \bm{T}^{-1}\bm{B}_j\bm{T} \text{ for self-adjoint } \bm{B}_j,
\]
all matrices in $\bm{T}^{-1}\mathcal{T}\bm{T}$ can admit the same block structure shared by the matrices in its generating set $\bm{T}^{-1}\mathcal{B}\bm{T}$.
Since this block structure is finest in $\mathcal{B}$, it is also finest in $\mathcal{T}$.
By definition, the unitary matrix $\bm{P}$ yields the finest SBD of the matrices in $\mathcal{T}$.
Because $\mathcal{B}$ and $\mathcal{T}$ have the same finest common block structure, $\bm{P}$ also generates the finest SBD of the matrix set $\mathcal{B}$.
\end{proof}

Taken together, \Cref{thm:1,alg:1} establish a powerful framework for finding the finest SBD for any set of self-adjoint matrices. 
In the sections below, we show how SBD transformations can be used to characterize the stability of synchronization patterns in an efficient and unified fashion. 
We will focus mainly on oscillators coupled through undirected networks (i.e., self-adjoint coupling matrices). 
The possibility of extending the current framework to directed networks will be discussed in \cref{sec:directed}.

\section{Cluster synchronization from a symmetry-independent perspective}
\label{sec:cluster}

Consider a network of $n$ identical $d$-dimensional oscillators forming a synchronization pattern of $M$ clusters.
The cluster synchronization subspace can be defined as an $Md$-dimensional subspace of the full $nd$-dimensional state space, in which oscillators from the same cluster have exactly the same dynamics.
{\it Parallel perturbations} are perturbations inside the cluster synchronization subspace---they do not destroy the cluster synchronization pattern.
{\it Transverse perturbations} are those that are perpendicular to the cluster synchronization subspace---all of them must have negative Lyapunov exponents in order for the cluster synchronization pattern to be stable.

One key step in analyzing the stability of a synchronization pattern amounts to finding a coordinate system that separates the evolution of transverse and parallel perturbation modes. 
The coordinate transformation should also decouple the transverse perturbation modes as much as possible.
The current state-of-the-art method exploits symmetries in the network structure and uses the machinery of group representation theory \cite{pecora2014cluster}. 
In this section, we establish a symmetry-independent framework based on SBD transformations and compare \cref{alg:1} with symmetry-based algorithms in terms of speed, simplicity, and versatility. 

\subsection{The symmetry perspective}

A network of $n$ identical oscillators with adjacency-matrix coupling can be described as the following special case of \cref{eq:general}:
\begin{equation}
  \dot{\bm{x}}_i = \bm{F}(\bm{x}_i) + \sigma \sum_{j=1}^{n} A(i,j) \bm{H}(\bm{x}_j),
  \label{eq:adj-cluster}
\end{equation}
where $\bm{A} = \{A(i,j)\}$ is the self-adjoint adjacency matrix encoding the structure of the underlying network.

To study the stability of cluster synchronization states, it is necessary to first identify possible synchronization patterns supported by \cref{eq:adj-cluster}, a subset of which is determined by the symmetries of the network. 
The network symmetries, described by the graph automorphism group $\mathrm{Aut}(\bm{A})$, can be computed using discrete algebra software \cite{stein2008sage}. 
The nodes can be partitioned into disjoint clusters: two nodes belong to the same cluster if there is a symmetry operation (i.e., node permutations that respect the adjacency matrix) from the automorphism group that maps one node to the other.
In other words, nodes are partitioned according to the orbits under the action of $\mathrm{Aut}(\bm{A})$ \cite{pecora2014cluster}. 
This is the coarsest synchronization pattern that can be derived from network symmetry. 
If one instead considers a subgroup $G$ of $\mathrm{Aut}(\bm{A})$, the nodes can then be partitioned into finer clusters according to the orbits under the action of $G$. 
We call these partitions the {\it orbital partitions} of the network and refer to the corresponding clusters as {\it symmetry clusters}.
Each element $g \in G$ can be represented by a permutation matrix $\bm{R}_g$, upon whose action the adjacency matrix $\bm{A}$ stays invariant, i.e., $\bm{R}_g^* \bm{A} \bm{R}_g = \bm{A}$. 
The set of matrices $\{\bm{R}_g\}$ forms a {\it permutation representation} of the subgroup $G$. 
As a result of network symmetry, nodes in each symmetry cluster receive the same input from other clusters and admit equal dynamics. 
In other words, cluster synchronization patterns based on symmetry clusters are guaranteed to be {\it flow invariant} (i.e., subspaces of the state space that are invariant under time evolution of the system).

Once a subgroup $G$ and the corresponding orbital partition have been identified, one can find the associated cluster synchronization state by evolving the dynamical equation on a quotient network in which each symmetry cluster is represented by a single node. 
\Cref{eq:adj-cluster} can then be linearized around the cluster synchronization state, leading to a variational equation that determines the stability of the corresponding synchronization pattern:
\begin{equation}
  \begin{split}
    \delta\dot{\bm{X}} = & \left( \sum_{m=1}^M \bm{E}_m \otimes J\bm{F}(\bm{s}_m) + \sigma \left( \bm{A} \otimes \bm{I}_d \right) \sum_{m=1}^M \bm{E}_m \otimes J\bm{H}(\bm{s}_m) \right) \delta\bm{X}, \\
    = & \left( \sum_{m=1}^M \bm{E}_m \otimes J\bm{F}(\bm{s}_m) + \sigma \sum_{m=1}^M \bm{A}\bm{E}_m \otimes J\bm{H}(\bm{s}_m) \right) \delta\bm{X},
  \label{eq:var-adj-cluster}
  \end{split}
\end{equation}
where $\bm{s}_m$ is the synchronization trajectory of the $m$th cluster, $\delta\bm{X} = (\delta \bm{x}_1^\intercal, \cdots, \delta \bm{x}_n^\intercal)^\intercal$ is the $nd$-dimensional perturbation vector, and $J$ is the Jacobian operator. 
Let $\mathcal{C}_m$ denote the set of nodes in the $m$th cluster. 
Then
\[
  \bm{E}_m(i,i) =
  \begin{cases}
    1 & \quad \text{if } i \in \mathcal{C}_m, \\
    0 & \quad \text{otherwise} \\
  \end{cases}
\]
is an $n \times n$ diagonal matrix encoding the nodes in the $m$th cluster.
It follows that $\sum_{m=1}^M \bm{E}_m = \bm{I}_n$.

A key insight from \cite{pecora2014cluster} is that there exists a coordinate choice under which the transformed adjacency matrix $\tilde{\bm{A}} = \bm{Q}^*\bm{A}\bm{Q}$ has a block-diagonal form that matches the cluster structure. 
The authors termed this choice the IRR coordinates since the transformation matrix $\bm{Q}$ decomposes the permutation representation $\{\bm{R}_g\}$ into the direct sum of IRRs of $G$.
In particular,
\begin{equation}
  \tilde{\bm{R}}_g = \bm{Q}^*\bm{R}_g\bm{Q} = \bigoplus_{j=1}^{\ell} \left(\tilde{\bm{R}}_g^{(j)} \otimes \bm{I}_{n_j} \right), \quad
  \tilde{\bm{A}} = \bm{Q}^*\bm{A}\bm{Q} = \bigoplus_{j=1}^{\ell} \left( \bm{I}_{m_j} \otimes \tilde{\bm{A}}^{(j)} \right),
  \label{eq:irr}
\end{equation}
where $\ell$ is the number of distinct IRRs present in $\{\tilde{\bm{R}}_g\}$, the $j$th block $\tilde{\bm{A}}^{(j)}$ is an $n_j \times n_j$ matrix with $n_j$ equal to the multiplicity of the $j$th IRR $\{\tilde{\bm{R}}_g^{(j)}\}$, and $m_j$ is the multiplicity of $\tilde{\bm{A}}^{(j)}$ as well as the dimension of $\tilde{\bm{R}}_g^{(j)}$. 
The trivial IRR (which maps every $g \in G$ to 1) is always present with multiplicity $M$, so there is always an $M \times M$ block in $\tilde{\bm{A}}$ corresponding to the dynamics inside the cluster synchronization subspace \cite{pecora2014cluster}. 
In this way, $\bm{Q}$ completely decouples the transverse perturbations from the parallel ones and also separates some of the transverse perturbation modes.

\subsection{The symmetry-independent perspective}

The IRR transformation decouples \cref{eq:var-adj-cluster} by exploiting the network symmetry and its IRRs.
The end result of this transformation is a block diagonalization of the matrix set $\mathcal{A} = \{\bm{E}_1,\cdots,\bm{E}_M,\bm{A}\}$.
But is finding IRRs the most effective way to block diagonalize these matrices?
Below we show that the answer is negative.

Readers might have noticed the parallel between the block forms in \cref{eq:A-W} and \cref{eq:irr}, which hints at a deep connection between the IRR and the SBD transformations. 
We now make this parallel precise by looking at the IRR transformation through the lens of matrix $*$-algebras.

To proceed, we introduce the {\it commutant algebra} $\mathcal{T}'$ of a matrix $*$-algebra $\mathcal{T} \subseteq \mathcal{M}_n$, which is the set of all matrices $\bm{C} \in \mathcal{M}_n$ that commute with every element in $\mathcal{T}$. Letting $[\bm{B},\,\bm{C}] = \bm{B}\bm{C} - \bm{C}\bm{B}$, one has
\begin{equation}
  \mathcal{T}' = \{ \bm{C} \in \mathcal{M}_n \ | \ [\bm{B},\bm{C}] = \bm{0} \ \forall \ \bm{B}\in \mathcal{T} \}.
\end{equation}
$\mathcal{T}'$ is also a matrix $*$-algebra and enjoys the following dual relations with $\mathcal{T}$ \cite{maehara2011algorithm}: 
\begin{itemize}
  \item[] (a) $\mathcal{T}''=\mathcal{T}$, known as the double commutant property.
  \item[] (b) If the irreducible decomposition of $\mathcal{T}$ has blocks of sizes $n_j$ and multiplicities $m_j$, then the irreducible decomposition of $\mathcal{T}'$ is the direct sum of blocks of sizes $m_j$ and multiplicities $n_j$.
\end{itemize}

Given a subgroup $G$ of $\mathrm{Aut}(\bm{A})$, let $\mathcal{S}$ be the set of all $n \times n$ matrices over $\mathbb{C}$ that are invariant under the action of $G$. That is, 
\begin{equation}
  \mathcal{S} = \{ \bm{S} \in \mathcal{M}_n \ | \ \bm{R}_g^* \bm{S} \bm{R}_g = \bm{S} \ \forall \ g \in G  \}.
  \label{eq:define_s}
\end{equation}
First, we note that $\mathcal{S}$ is a matrix $*$-algebra. 
Second, we make a key observation involving the commutant algebra $\mathcal{S}'$. 
It is clear from \cref{eq:define_s} that $\mathcal{S}$ is the commutant algebra of the matrix $*$-algebra $\mathcal{R}$ generated by $\{\bm{R}_g\}$. 
According to the double commutant property, we have $\mathcal{S}'= \mathcal{R}'' = \mathcal{R}$, and hence $\{\bm{R}_g\}$ is a generating set of $\mathcal{S}'$.
Since the IRR transformation $\bm{Q}$ decomposes $\{\bm{R}_g\}$ into the form $\tilde{\bm{R}}_g = \bm{Q}^*\bm{R}_g\bm{Q} = \bigoplus_{j=1}^{\ell} \left(\tilde{\bm{R}}_g^{(j)} \otimes \bm{I}_{n_j} \right)$, the irreducible decomposition of $\mathcal{S}'$ has blocks of sizes $m_j$ and multiplicities $n_j$.\footnote{For any pair of square matrices $\bm{X}$ and $\bm{Y}$ there exists a permutation matrix $\bm{O}$ such that $\bm{X}\otimes\bm{Y} = \bm{O}^\intercal(\bm{Y}\otimes\bm{X})\bm{O}$.}

Next, we utilize the relation $\bm{S}\bm{R}_g = \bm{R}_g\bm{S}$ or, equivalently, 
\[
  \tilde{\bm{S}}\tilde{\bm{R}}_g = \bm{Q}^*\bm{S}\bm{Q}\bm{Q}^*\bm{R}_g\bm{Q} = \bm{Q}^*\bm{R}_g\bm{Q}\bm{Q}^*\bm{S}\bm{Q} = \tilde{\bm{R}}_g\tilde{\bm{S}}
\] 
to show that $\bm{Q}$ performs the irreducible decomposition of $\mathcal{S}$. Writing out $\tilde{\bm{R}}_g$ and $\tilde{\bm{S}}$ more explicitly,
\[
  \tilde{\bm{R}}_g =
    \begin{pmatrix}
     \tilde{\bm{R}}_g^{(1)} \otimes \bm{I}_{n_1} &  &  & \cdots\;\bm{0}  \\
      & \tilde{\bm{R}}_g^{(2)} \otimes \bm{I}_{n_2} &  & \quad\;\;\vdots \\
      \vdots\quad\;\; &   & \ddots &   \\
      \bm{0}\;\cdots &  &  & \tilde{\bm{R}}_g^{(\ell)} \otimes \bm{I}_{n_{\ell}}
    \end{pmatrix}, \,
  \tilde{\bm{S}} = 
    \begin{pmatrix}
    \tilde{\bm{S}}_{11} & \tilde{\bm{S}}_{12} & \cdots & \tilde{\bm{S}}_{1\ell} \\
    \tilde{\bm{S}}_{21} & \tilde{\bm{S}}_{22} & \cdots & \tilde{\bm{S}}_{2\ell} \\
    \vdots  & \vdots  & \ddots & \vdots  \\
    \tilde{\bm{S}}_{\ell1} & \tilde{\bm{S}}_{\ell2} & \cdots & \tilde{\bm{S}}_{\ell\ell} 
    \end{pmatrix},
\]
one can see that the commutativity relation implies $(\tilde{\bm{R}}_g^{(i)}\otimes\bm{I}_{n_i})\tilde{\bm{S}}_{ij} = \tilde{\bm{S}}_{ij} (\tilde{\bm{R}}_g^{(j)}\otimes\bm{I}_{n_j})$. 
According to Schur's Lemma \cite{tinkham2003group,lam2013first}, $\tilde{\bm{S}}_{ij} = \bm{0}$ when $i \neq j$ (i.e., when the irreducible representations $\{\tilde{\bm{R}}_g^{(i)}\}$ and $\{\tilde{\bm{R}}_g^{(j)}\}$ are nonisomorphic), and $\tilde{\bm{S}}_{ij} = \bm{I}_{m_j} \otimes \tilde{\bm{S}}^{(j)}$ when $i=j$, where $\tilde{\bm{S}}^{(j)}$ is an $n_j \times n_j$ complex matrix.
Taken together, 
\begin{equation}
  \tilde{\bm{S}} = \bm{Q}^*\bm{S}\bm{Q} = \bigoplus_{j=1}^{\ell} \left( \bm{I}_{m_j} \otimes \tilde{\bm{S}}^{(j)} \right).
\end{equation}
Thus, $\bm{Q}$ simultaneously block diagonalizes all matrices in $\mathcal{S}$ into blocks of sizes $n_j$, each of multiplicity $m_j$.
Based on the dual relation (b) between the commutant algebras, we see that this is the irreducible decomposition of $\mathcal{S}$.


Accordingly, the IRR transformation can be interpreted within the framework of matrix $*$-algebra: it performs the irreducible decomposition of the matrix $*$-algebra $\mathcal{S}$ formed by all $n \times n$ matrices satisfying the symmetry condition \cref{eq:define_s}.
This interpretation explains the parallel between \cref{eq:A-W,eq:irr}.
However, $\mathcal{S}$ may not always be the best matrix $*$-algebra to work with for the stability analysis of synchronization patterns.

In particular, notice that the matrix $*$-algebra $\mathcal{T}$ generated by the matrix set $\mathcal{A} = \{\bm{E}_1,\cdots,\bm{E}_M,\bm{A}\}$ is always a subalgebra of $\mathcal{S}$, as $\bm{E}_m$ and $\bm{A}$ all share the symmetries defined by the subgroup $G$:
\[
  \bm{R}_g^* \bm{A} \bm{R}_g = \bm{A}, \quad \bm{R}_g^* \bm{E}_m \bm{R}_g = \bm{E}_m \quad \forall \, g \in G \ \text{and} \ 1 \leq m \leq M.
\]
This means that the IRR transformation could be considering the SBD of an unnecessarily large set of matrices, and $\mathcal{S}$ might have a coarser irreducible decomposition than its subalgebra $\mathcal{T}$. 
Thus, an SBD transformation applied directly to $\mathcal{A}$ will always give a block structure on par or finer than the one found by the IRR transformation.

\subsection{Optimal separation of perturbation modes}
\label{sec:perturbation class}

Next, we further characterize the decoupling among perturbation modes achieved by an SBD transformation. 
Given an adjacency matrix $\bm{A}$ and a flow-invariant synchronization pattern described by $\{\bm{E}_m\}$, we divide the perturbation modes into three classes according to their dynamical characteristics: 
\begin{enumerate}[(I)]
\item perturbation modes inside the cluster synchronization subspace;
\item perturbation modes transverse to the cluster synchronization subspace and belonging to a $d$-dimensional (the dimension of a single oscillator) flow-invariant subspace under the variational equation \cref{eq:var-adj-cluster}; 
\item perturbation modes transverse to the cluster synchronization subspace and that do not belong to a $d$-dimensional flow-invariant subspace under the variational equation \cref{eq:var-adj-cluster}.
\end{enumerate}
Class I perturbation modes do not destroy the cluster synchronization pattern, while those from classes II and III do.

From an algebraic point of view, the class I perturbation modes correspond to the $M$-dimensional invariant subspace spanned by the diagonal vectors of matrices in $\{\bm{E}_m\}$.
Perturbation modes of class II are associated with a one-dimensional invariant subspace under the matrix set $\mathcal{A}$, whereas perturbation modes of class III are induced by higher-dimensional invariant subspaces.  

In particular, perturbation modes in class II are localized inside individual clusters and each of them is decoupled from all other perturbation modes.
This is the basis of the so-called {\it isolated desynchronization} \cite{pecora2014cluster}, in which an individual cluster can desynchronize without destroying synchronization in other clusters despite their mutual influence mediated by intercluster coupling.
Class III perturbation modes arise from {\it intertwined clusters} \cite{pecora2014cluster,cho2017stable}. 
Two clusters are intertwined if there exist transverse perturbations inside one cluster that are coupled to transverse perturbations in the other.
It is worth noting that not all transverse perturbations inside intertwined clusters belong to class III, as some of them form $d$-dimensional invariant subspaces on their own and are thus class II. 


An SBD transformation finds the optimal separation of perturbation modes that can be inferred from the network structure and cluster patterns.
In particular, it is guaranteed to separate the parallel perturbations (class I) from the transverse ones (classes II and III), completely decouple the perturbation modes in class II, and separate the ones in class III as much as possible.
This separation is ``robust'' in the sense that it works for any intrinsic dynamics $\bm{F}$ and coupling function $\bm{H}$, since it is induced solely by the algebraic structure of the system.
For some special $\bm{F}$ and $\bm{H}$, the flow-invariant subspaces (under the variational equation) induced by the minimal invariant subspaces (under the matrix set $\mathcal{A}$) may not be minimal and can be further decomposed.\footnote{A flow-invariant subspace is said to be {\it minimal} if it does not contain proper subspaces that are flow invariant.}
But such special flow-invariant subspaces are not robust and will be destroyed by small changes to $\bm{F}$ and/or $\bm{H}$.

\subsection{Treating clusters not induced by symmetry}

The strength of the SBD framework becomes even more evident when the oscillators are diffusively coupled, which is another natural class of coupling schemes \cite{pecora1998master,nicosia2013remote} featured prominently in real systems, such as consensus networks \cite{li2010consensus,olfati2007consensus}. 
This class of systems is a special case of \cref{eq:general} and can be described by
\begin{equation}
  \dot{\bm{x}}_i = \bm{F}(\bm{x}_i) - \sigma \sum_{j=1}^{n} L(i,j) \bm{H}(\bm{x}_j),
  \label{eq:lap-cluster}
\end{equation}
where the Laplacian matrix $\bm{L} = \{L(i,j)\}$ is defined as $L(i,j) = \delta_{ij}\mu_i - A(i,j)$, for $\delta_{ij}$ denoting the Kronecker delta and $\mu_i = \sum_j A(i,j)$ representing the indegree of node $i$. 
When compared to systems with adjacency-matrix coupling, the main difference in those with Laplacian-matrix coupling is that the interaction between two oscillators vanishes when they synchronize. 

As a consequence of the diffusive coupling, additional flow-invariant synchronization patterns can emerge that are not predicted by network symmetry. 
These additional patterns are called {\it Laplacian clusters}, and they can be formed by merging some of the symmetry clusters \cite{sorrentino2016complete}.
Since an adjacency matrix and its corresponding Laplacian matrix have exactly the same symmetry (i.e., $\mathrm{Aut}(\bm{A}) = \mathrm{Aut}(\bm{L})$), the original IRR transformation cannot distinguish the systems described by equations \cref{eq:adj-cluster,eq:lap-cluster}.
Thus, it fails to decouple the parallel and transverse perturbations if applied directly. 
In \cite{sorrentino2016complete}, it was proposed that one can apply the IRR transformation to the adjacency matrix of the diffusive network first, then perform additional local coordinate transformations to account for the merging of symmetry clusters induced by the diffusive coupling. 
This method provides valuable insight, but at the same time it adds an additional layer of complexity on top of the IRR calculations.
In fact, all necessary information for the separation of perturbation modes is already encoded in the Laplacian matrix $\bm{L}$ and Laplacian clusters $\{\mathcal{C}_m\}$.
Accordingly, neither network symmetry nor local coordinate transformations are needed in order to properly decouple the variational equation.

What the IRR transformation misses is the diffusive nature of the Laplacian-matrix coupling. 
Due to the IRR transformation's inability to detect nonsymmetry features (e.g., the zero-row-sum of the Laplacian matrix), one has to perform local coordinate transformations to ``manually'' incorporate that information.
An SBD transformation, on the other hand, does not assume any symmetry a priori. 
It can thus be applied directly to the matrix set $\mathcal{L} = \{\bm{E}_1,\cdots,\bm{E}_M,\bm{L}\}$ and automatically takes the additional features of $\bm{L}$ into account. 
As in the case of adjacency-matrix coupling, an SBD transformation can find the optimal separation of perturbation modes for any flow-invariant synchronization pattern under Laplacian-matrix coupling.

\begin{figure}[t]
\centering
\includegraphics[width=.9\linewidth]{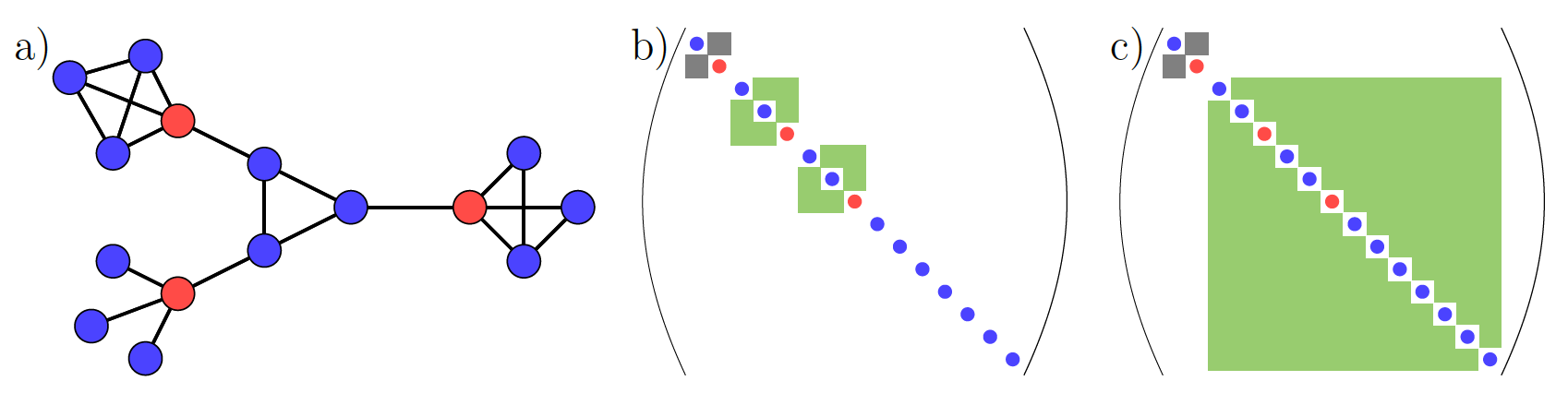}
\vspace{-3mm}
\caption{Input clusters not induced by network symmetry. 
(a) External equitable partition that is not an orbital partition. The partition consists of two clusters---colored red and blue, respectively. The corresponding synchronization pattern is flow invariant under Laplacian-matrix coupling despite there being no symmetry between the blue nodes from different cliques or between the three red nodes.
(b) Common block structure of $\{\bm{E}_1,\bm{E}_2,\bm{L}\}$ found by an SBD transformation. 
Colored circles indicate the cluster each perturbation mode belongs to. 
Gray squares mark the block associated with the parallel perturbations while green squares indicate transverse blocks that are not $1\times1$.
(c) Common block structure produced by the transformation proposed in Ref.~\cite{schaub2016graph}.
}
\label{fig:0}
\end{figure}

More recently, \cite{schaub2016graph} introduced the concept of {\it external equitable partition} as a new way of finding flow-invariant synchronization patterns in Laplacian-matrix coupled systems. 
An external equitable partition splits a network into {\it input clusters} such that each node inside a cluster connects to the same number of nodes in another cluster. 
This definition guarantees that any external equitable partition corresponds to a flow-invariant synchronization pattern and is more general than orbital partitions. 
One example of an external equitable partition that is not an orbital partition is presented in \cref{fig:0}(a).
\cite{schaub2016graph} also proposed the only other widely known symmetry-independent method for the stability analysis of cluster synchronization patterns, which is based on the concept of quotient graphs and uses results from algebraic graph theory. 
While it succeeds in decoupling the parallel and transverse perturbations, in general it fails to further separate the transverse perturbation modes.
In contrast, an SBD transformation not only separates the parallel perturbations from the transverse ones, but it also optimally decouples the transverse perturbations (compare \cref{fig:0}(b) to \cref{fig:0}(c)).
It is still possible to modify the symmetry-based IRR framework for the stability analysis of input clusters by introducing additional local coordinate transformations \cite{siddique2018symmetry}. 
However, the SBD framework can be applied more directly to the problem and, as we show below, leads to a much more scalable algorithm.

\subsection{Computational efficiency and error tolerance}

An SBD transformation is not only directly applicable to more synchronization patterns but is also demonstrably more efficient to compute.  
The computational complexity of the SBD algorithm (\cref{alg:1}) scales with the network size as $\mathcal{O}(n^3)$.
Moreover, the computational cost is independent of the number of symmetries in the network, as demonstrated in \cref{fig:1}(a). 
This gives the SBD algorithm a huge computational advantage over the algorithm based on the IRR transformation \cite{IRRalg}, which relies on the computation of irreducible group representations and becomes inefficient when a large number of symmetries is present.
Since the complete graph with $n$ nodes has the {\it symmetric group} $S_n$ as its automorphism group and $\vert S_n \vert = n!$, the number of symmetries can grow as the factorial of network size $n$.
Combined with the observation that the CPU time scales with the number of symmetries as a power law for the IRR algorithm (orange dots in \cref{fig:1}(a)), it follows that the computational cost of the IRR algorithm can grow superexponentially with the network size. 
This is further illustrated in \cref{fig:1}(b), where the gap between the worst-case CPU time for the two algorithms grows rapidly with $n$, and the IRR algorithm can be more than six orders of magnitude slower than the SBD algorithm even for networks of moderate sizes (e.g., $n=14$).
We note that another polynomial-time algorithm exists, which applies to symmetry clusters \cite{cho2017stable}. 
However, that algorithm was designed to separate clusters that synchronize independently of one another, and thus is not intended to have the same decoupling power as the IRR and SBD algorithms for intertwined clusters and their generalizations.

\begin{figure}[t]
\centering
\includegraphics[width=.55\textwidth]{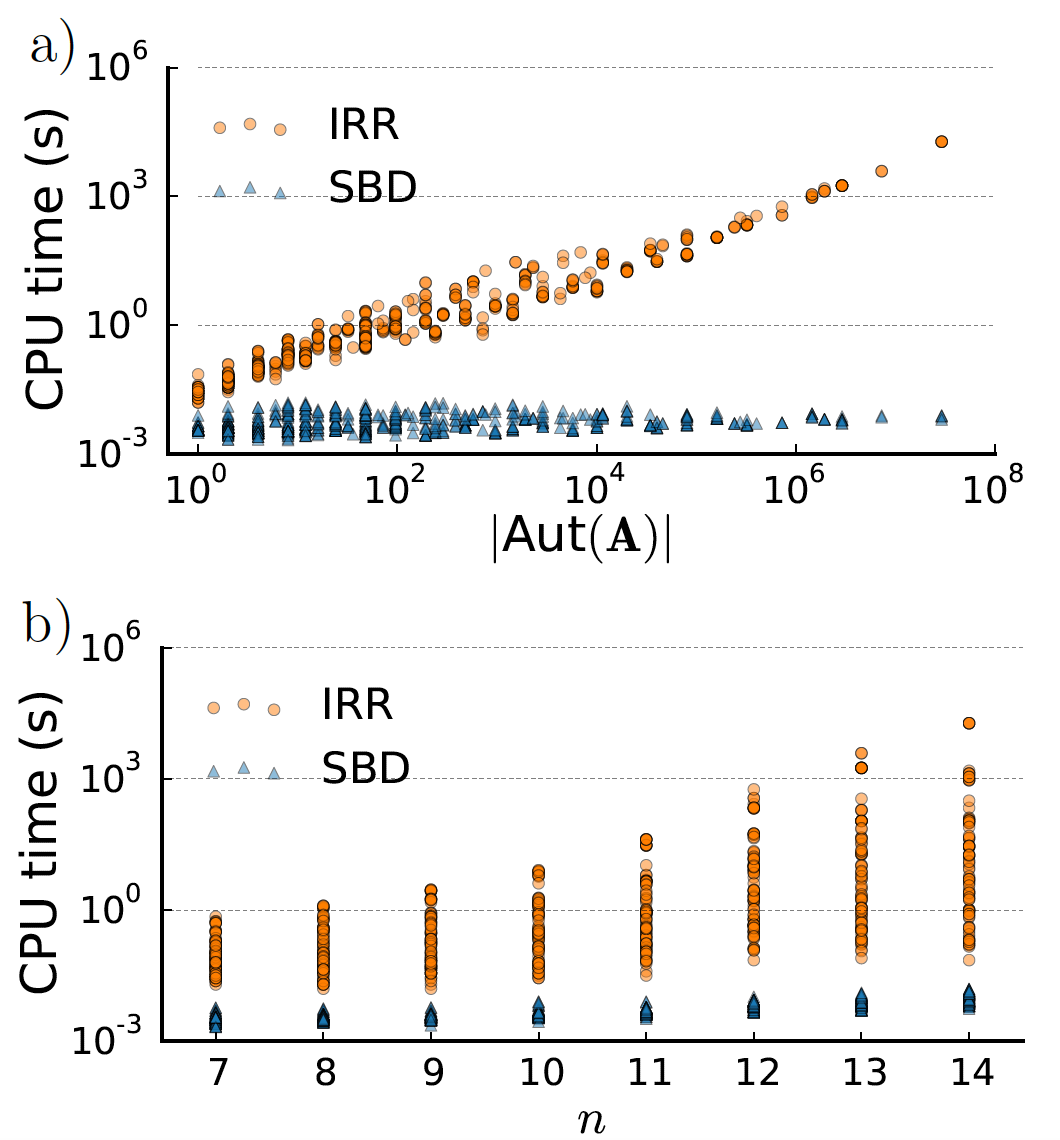}
\vspace{-2mm}
\caption{Comparing the efficiency of the SBD algorithm (\cref{alg:1}) and the IRR algorithm \cite{IRRalg}. 
The tests are done using networks of varying sizes formed by randomly removing 2--10 edges from complete graphs. 
(a) CPU time required to find the IRR transformations (orange) and the SBD transformations (blue) for the symmetry clusters produced by the orbital partition of $\text{Aut}(\bm{A})$, plotted against the number of network symmetries. 
(b) Same data with CPU time plotted as functions of the network size.
All tests were run on an Intel Xeon E5-2680 v3 processor.}
\label{fig:1}
\end{figure}

Another aspect in which the SBD algorithm excels is its error tolerance. 
\Cref{alg:1} can be easily adapted to treat cases in which the coupling matrices contain small errors.
In this case one can simply replace the linear dependence tests by approximate linear dependence tests. 
That is, a vector can be regarded as being linearly independent from a set of vectors if it cannot be expressed as a linear combination of the existing vectors within some preset tolerance.
Unlike the IRR algorithm, which works best when the entries in the adjacency matrix are exact, the SBD algorithm, with its error control capability, has the flexibility to deal with noises and uncertainties in real data.

As an example, we consider a 30-node network generated by deleting six randomly selected edges from a complete graph. 
For each entry of the otherwise binary adjacency matrix, we add a mismatch term drawn from a normal distribution with zero mean and a standard deviation of $10^{-3}$. 
These mismatches can model hardware imperfections and measurement errors in real systems. 
We then equip each node with the dynamics of an electro-optic oscillator used for the first experimental demonstration of chimera states \cite{hagerstrom2012experimental}, described by
\begin{equation}
  \theta_i(t+1) = \left[\beta I(\theta_i(t)) + \sigma \sum_{j=1}^n A(i,j) I(\theta_j(t)) + \xi_i(t) + \delta \right] \text{mod} \: 2\pi,
  \label{eq:opto}
\end{equation}
where $\theta_i$ is the phase for the $i$th oscillator, $\beta$ is the strength of the self-feedback coupling, and $\delta = 0.525$ is introduced to suppress the trivial solution at the origin. 
The nonlinear function $I(\theta) = [1 - \cos(\theta)]/2$ models the dynamics of individual oscillators as well as their interaction function. 
To demonstrate the robustness of our approach in the context of \cref{eq:adj-cluster}, here we also introduce noise terms $\xi_i$ to mimic experimental conditions.
The noise terms are Gaussian, have intensity of $10^{-5}$, and are independent for each oscillator.

\begin{figure}[t]
\centering
\includegraphics[width=1\linewidth]{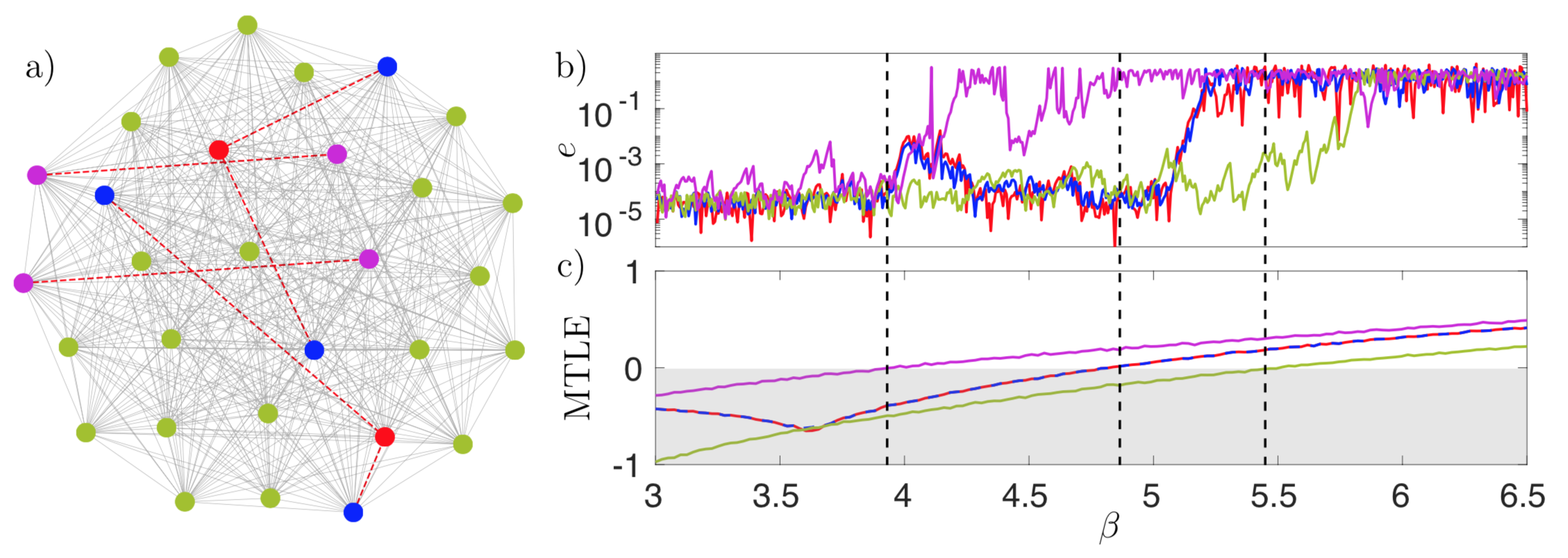}
\vspace{-5mm}
\caption{Symmetry-breaking bifurcations of cluster synchronization patterns in a dense random network.
(a) 30-node network (generated by removing the six red dashed edges from a complete graph) colored according to the orbital partition induced by $\mathrm{Aut}(\bm{A})$. 
(b) Synchronization error for each cluster as the bifurcation parameter $\beta$ is increased slowly from 3 to 6.5.
(c) MTLE for individual clusters calculated from the SBD coordinates.
Both (b) and (c) show a sequence of three desynchronization bifurcations as $\beta$ increases, which are indicated by vertical dashed lines.}
\label{fig:2}
\end{figure}

The network admits a flow-invariant synchronization pattern of four clusters, as shown in \cref{fig:2} (clusters are indicated by node colors), which is induced by the orbital partition of $\mathrm{Aut}(\bm{A})$. 
The IRR algorithm is not practical for this system due to the mismatch terms in the adjacency matrix and the huge number of symmetries present, which is generally the case for dense random networks. 
This particular network has approximately $1.557 \times 10^{20}$ symmetries, and thus extrapolation from \cref{fig:1} suggests around a billion years of CPU time for the IRR algorithm to find the right transformation. 
In contrast, \Cref{alg:1} finds an SBD transformation of $\{\bm{E}_1,\cdots,\bm{E}_4,\bm{A}\}$ within one CPU second. 
Under the SBD coordinates, the matrices share one $4 \times 4$ block corresponding to class I perturbation modes, twenty-four $1 \times 1$ blocks corresponding to class II perturbation modes, and one $2 \times 2$ block corresponding to class III perturbation modes (the red and blue clusters are intertwined).\footnote{See \cref{sec:perturbation class} for the definition of the perturbation classes.}

Based on this decomposition, we calculate the maximal transverse Lyapunov exponent (MTLE) for each cluster over a range of parameter $\beta$. 
We further verify their stabilities by directly simulating equation \cref{eq:opto} for $\beta$ slowly increasing from $3$ to $6.5$ and calculating the synchronization error in each cluster.
We define the synchronization error $e_m$ in the $m$th cluster $\mathcal{C}_m$ with $n_m$ nodes as the standard deviation of the phases $\theta_i$ in that cluster: 
\[
  e_m = \sqrt{\sum_{j\in \mathcal{C}_m} (\theta_j - \bar{\theta})^2/n_m},
\]
where $\bar{\theta} = \sum_{j\in \mathcal{C}_m}\theta_j/n_m$.
\Cref{fig:2}(b) and (c) show a sequence of three symmetry-breaking bifurcations as $\beta$ is increased: it starts with the isolated desynchronization of the magenta cluster around $\beta = 3.9$, followed by the concurrent loss of stability of the red and blue clusters around $\beta = 4.9$, and ends with a transition to incoherence in the green cluster just below $\beta = 5.5$.

\section{Extension to nonidentical oscillators and coupling functions}
\label{sec:unified}

The techniques developed in the previous sections can be easily extended to study cluster synchronization of nonidentical oscillators with disparate coupling functions.
In this section, we establish such a generalized formalism and use it to discover permanently stable chimera states in multilayer networks.

A system of (possibly nonidentical) oscillators diffusively coupled through a multilayer network with $R$ different types of interactions can be described by
\begin{equation}
  \dot{\bm{x}}_i = \bm{F}_{k(i)}(\bm{x}_i) - \sum_{r=1}^{R} \sigma_r \sum_{j=1}^{n} \bm{L}_{r}(i,j) \bm{H}_r(\bm{x}_j),
  \label{eq:7}
\end{equation}
where $\bm{F}_{k(i)}: \mathbb{R}^d \rightarrow \mathbb{R}^d$ 
is the vector field governing the uncoupled dynamics of the $i$th oscillator, $k$ indexes the $K$ different functions $\{\bm{F}_k\}$ that can be assigned to each oscillator, and $\bm{L}_r$ is the Laplacian matrix representing the $r$th type of interaction $\bm{H}_r$.
Other special cases of \cref{eq:general}, corresponding to different choices of the coupling matrices in \cref{eq:7}, can be treated similarly, as outlined below.

For any flow-invariant synchronization pattern, a variational equation governing the evolution of $\delta\bm{X} = (\delta \bm{x}_1^\intercal, \cdots, \delta \bm{x}_n^\intercal)^\intercal$ can be obtained by linearizing \cref{eq:7} around the corresponding cluster synchronization manifold:
\begin{equation}
  \delta\dot{\bm{X}} = \left( \sum_{m,k} \bm{E}_m\bm{D}_k \otimes J\bm{F}_k(\bm{s}_m) - \sum_{m,r} \sigma_r \bm{L}_r\bm{E}_m \otimes J\bm{H}_r(\bm{s}_m) \right) \delta\bm{X},
  \label{eq:var-multi-cluster}
\end{equation}
where $\bm{s}_m$ is the synchronization trajectory of the $m$th cluster. 
Recall that $\bm{E}_m$ is an $n \times n$ diagonal matrix encoding the nodes inside the $m$th cluster.
Similarly, let $\mathcal{N}_k$ be the set of nodes equipped with the $k$th function $\bm{F}_k$. 
Then 
\[
  \bm{D}_k(i,i) =
  \begin{cases}
    1 & \quad \text{if } i \in \mathcal{N}_k, \\
    0 & \quad \text{otherwise} \\
  \end{cases}
\]
are $n \times n$ diagonal matrices encoding the assignment of heterogeneous nodes, whose sum satisfies $\sum_{k=1}^K \bm{D}_k = \bm{I}_n$.
In order to find the coordinates that optimally decouple \cref{eq:var-multi-cluster}, one can apply \cref{alg:1} to the following matrix set: $\{\bm{E}_1,\cdots,\bm{E}_M,\bm{D}_1,\cdots,$ $\bm{D}_K,\bm{L}_1,\cdots,\bm{L}_R\}$.

Our formalism can be used, in particular, to search for permanently stable chimera states in multilayer networks.
Broadly speaking, chimera states and their generalizations refer to states in which coherence and incoherence coexist in a system. 
In the context of coupled oscillators, a network in a chimera state splits into one group of synchronized oscillators and one group of incoherent oscillators \cite{panaggio2015chimera,omel2018mathematics}.
Over the past two decades, chimera states have been shown to be a general phenomenon \cite{kaneko1990clustering,kuramoto2002coexistence,abrams2004chimera,abrams2008solvable,omel2008chimera,yeldesbay2014chimeralike,sethia2014chimera,schmidt2015clustering,ashwin2015weak,semenova2016coherence} that arises robustly in physical systems \cite{hagerstrom2012experimental,tinsley2012chimera,martens2013chimera,bick2017robust,totz2018spiral}.
Meanwhile, multilayer and multiplex networks have recently emerged as suitable descriptions of many real systems \cite{kivela2014multilayer,boccaletti2014structure}. 
In the synchronization community, such networks are often used to represent oscillators coupled through multiple types of interactions \cite{sorrentino2012synchronization,irving2012synchronization,sevilla2016inter,Genioe1601679,blaha2019cluster,tang2019master}.

Given the relevance of these developments, it is of interest to consider chimera and chimera-like states in networks with two or more types of interactions.
There have been previous reports of chimera states in multiplex networks based on numerical simulations \cite{ghosh2016birth,majhi2017chimera}.
However, an analytical treatment of their stability is still lacking.
The formalism developed here bridges this gap, since many chimera and chimera-like states can be seen as special cluster synchronization patterns \cite{hart2016experimental,cho2017stable}.

\begin{figure}[t]
\centering
\includegraphics[width=.7\linewidth]{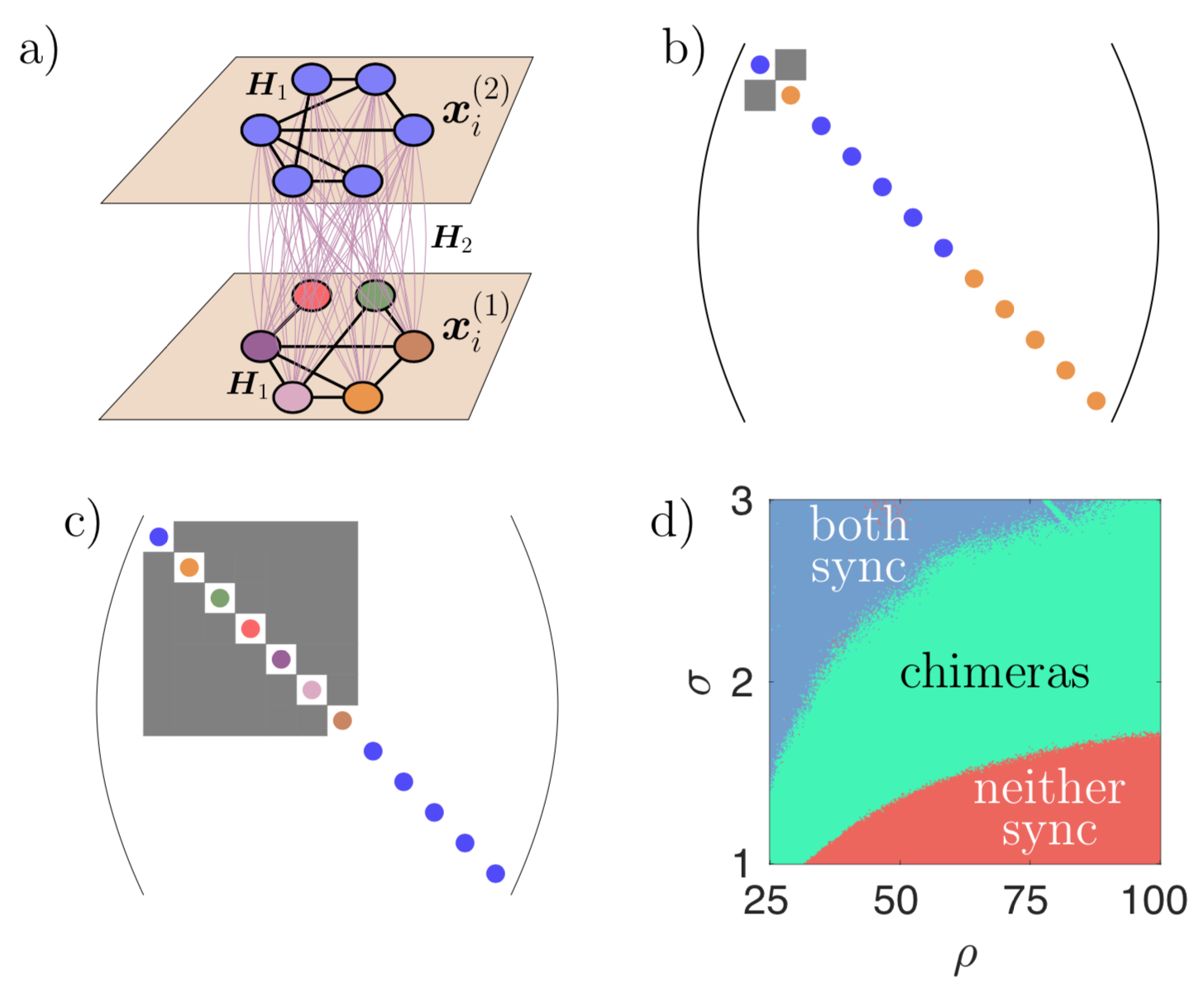}
\vspace{-3mm}
\caption{Chimera states in a multilayer network. (a) Two-layer network of Lorenz oscillators with different intralayer and interlayer interactions, given by $\bm{H}_1 = (0, 0, z)^\intercal$ and $\bm{H}_2 = (0, 0, x)^\intercal$, respectively. The color coded nodes represent a chimera state in which the first layer is incoherent and the second layer is synchronized.
(b) Finest common block structure for the two-cluster state in which both layers are synchronized, which is obtained through an SBD transformation.
(c) Finest common block structure for the seven-cluster state depicted in (a), also obtained through an SBD transformation.
(d) Diagram in the $\rho$-$\sigma$ plane characterizing the stability of the two patterns. 
The three regions correspond to parameters for which both patterns are unstable (red), both patterns are stable (blue), and only the seven-cluster pattern (i.e., chimera state) is stable (green).
}
\label{fig:3}
\end{figure}

As an example, we consider a multilayer network depicted in \cref{fig:3}(a). 
Each layer consists of six identical Lorenz oscillators interacting through eight connections with the coupling function $\bm{H}_1 = (0, 0, z)^\intercal$.
We represent the intralayer connections in the first (second) layer using the Laplacian matrix $\bm{L}_1^{(1)}$ ($\bm{L}_1^{(2)}$).
In addition, the two layers are all-to-all coupled through the coupling function $\bm{H}_2 = (0, 0, x)^\intercal$.
The oscillators in the first layer are thus described by the equations
\begin{equation}
  \begin{split}
    \dot{x}_i^{(1)} = & \;\alpha(y_i^{(1)}-x_i^{(1)}),\\
    \dot{y}_i^{(1)} = & \;x_i^{(1)}(\rho-z_i^{(1)})-y_i^{(1)},\\
    \dot{z}_i^{(1)} = & \;x_i^{(1)}y_i^{(1)}-\beta z_i^{(1)} - \sigma_1\sum_j\bm{L}_1^{(1)}(i,j)z_j^{(1)} + \sigma_2\sum_j(x_j^{(2)}-x_i^{(1)}), 
  \end{split}
\end{equation}
where we set $\alpha=10$, $\beta=2$, $\sigma_1=\sigma$, and $\sigma_2=0.2\sigma$, leaving the parameters $\rho$ and $\sigma$ to be varied.
The oscillators in the second layer are described by similar equations.

To search for chimera states where one layer is synchronized and the other is incoherent, we need to analyze the linear stability of two different cluster synchronization patterns (both formed by input clusters).
Specifically, the two-cluster state in which both layers are coherent ($\bm{x}_1^{(1)}=\cdots=\bm{x}_6^{(1)}$, $\bm{x}_1^{(2)}=\cdots=\bm{x}_6^{(2)}$) should be unstable, while the seven-cluster state $\bm{x}_1^{(1)}\neq\cdots\neq\bm{x}_6^{(1)}, \bm{x}_1^{(2)}=\cdots=\bm{x}_6^{(2)}$ (each cluster represented by a different color in \cref{fig:3}(a)) should be stable.
In both cases, the cluster synchronization manifold can be found by simulating Lorenz oscillators coupled through the corresponding quotient network.
Applying the SBD algorithm to the two-cluster state leads to a common block structure for the matrices in the variational equation \cref{eq:var-multi-cluster}, as shown in \cref{fig:3}(b), where a diagonal entry is colored orange if the corresponding perturbation mode belongs to the first layer and is colored blue if the mode belongs to the second layer.
In this case, the transverse perturbation modes are fully decoupled and the $2\times2$ block corresponds to perturbations inside the cluster synchronization subspace.
Similarly, the common block structure for the seven-cluster state is shown in \cref{fig:3}(c).
In this case, we have a $7\times7$ block representing the parallel perturbations and five $1\times1$ blocks related to the transverse perturbation modes for the coherent layer.
It is straightforward to perform stability analysis under these SBD coordinates. 
We show the results in the $\rho$-$\sigma$ diagram of \cref{fig:3}(d).
Red indicates parameters for which both patterns are unstable, while blue indicates where both patterns are stable.
Chimera states are found in the green region, where only the seven-cluster pattern is stable.

A representative trajectory of the chimera state for $\rho = 60$ and $\sigma = 2$ is shown in \cref{fig:4}.
The lower and upper panels show the dynamics of $x$ variables for oscillators in each layer, while the middle panel shows their respective synchronization errors.
This chimera state is permanently stable and can emerge from random initial conditions.

\begin{figure}[t]
\centering
\includegraphics[width=\linewidth]{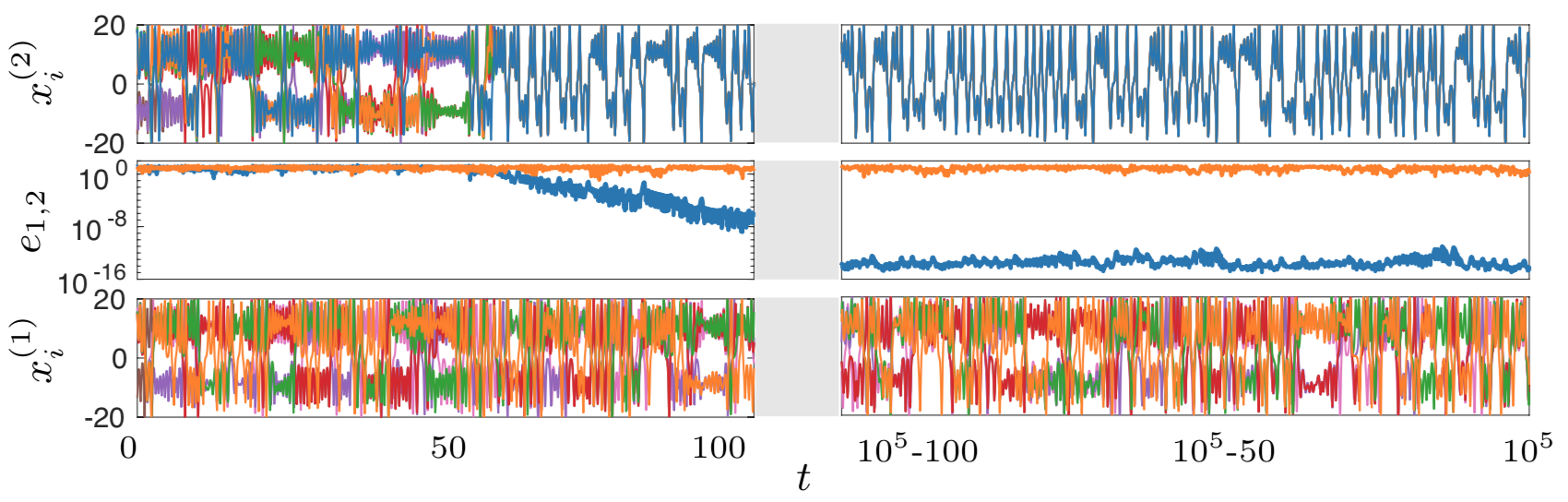}
\vspace{-6mm}
\caption{Trajectories converging to a chimera state in which layer 2 synchronizes and layer 1 remains incoherent for the system in \cref{fig:3}. 
Top and bottom panels: trajectories for oscillators in each layer. 
Each oscillator in one layer is assigned a different color.
When a layer is synchronized, only one color is visible since all trajectories overlap.
Middle panel: synchronization errors $e_1$ for the first layer (orange) and $e_2$ for the second layer (blue).
The parameters used are $\rho = 60$ and $\sigma = 2$.}
\label{fig:4}
\end{figure}

\section{Concluding remarks}
\label{sec:directed}

The framework established here utilizes the finest simultaneous block diagonalization of matrices to study cluster synchronization patterns in complex networks.
This framework has its theoretical foundation rooted in the theory of matrix $*$-algebra and does not rely on symmetry information in the system. 
This results in an algorithm that is faster, simpler, and more robust than the state-of-the-art algorithm based on irreducible representations of network symmetry.
In particular, the SBD framework enjoys the following advantages over the IRR framework and its variants:
\begin{enumerate}
  \item It applies straightforwardly to {\it any} flow-invariant synchronization pattern, including those formed by symmetry clusters, Laplacian clusters, and input clusters.
  \item It can easily treat nonidentical oscillators and oscillators coupled through multiple types of interactions.
  \item It is highly scalable because the SBD transformations can be calculated much more efficiently than the IRR transformations, which enables the stability analysis of complex synchronization patterns in large networks and in networks with a high degree of symmetry.
  \item It is especially suited for practical applications because \cref{alg:1} is robust to uncertainties in the network structure typical of real systems.
\end{enumerate}

A MATLAB implementation of \cref{alg:1} is available online and comes with illustrative examples of use.\footnote{See \url{https://github.com/y-z-zhang/net-sync-sym/}.}
The utility of this algorithm is not limited to network synchronization problems and can be applied, for instance, to reduce the complexity of many problems in which multiple matrices are involved, such as in the control of network systems and in semidefinite programming \cite{murota2010numerical}.

An important open problem for future research concerns the case of directed networks. 
When considering cluster synchronization patterns in directed networks, two complications arise. 
The first concerns the identification of valid clusters. 
Directed networks support many flow-invariant synchronization patterns that do not result from orbital partitions. 
Thus, it is often the case that a synchronization pattern of interest will not be identified by a software based on computational group theory. 
Indeed, any partition of the nodes that satisfies the balanced equivalence relations \cite{golubitsky2006nonlinear,stewart2003symmetry} gives rise to a flow-invariant cluster synchronization pattern. 

The second difficulty involves finding an optimal coordinate system to separate perturbation modes in the stability analysis. 
Since in directed networks the coupling matrices are no longer self-adjoint, one must consider the corresponding matrix algebra (as opposed to the matrix $*$-algebra) to obtain the finest SBD of matrices in the variational equation.
Unlike matrix $*$-algebras, matrix algebras are no longer closed under the conjugate transpose operation and are generally not semisimple algebras. 
This renders the Artin--Wedderburn theorem inapplicable and introduces ``bad'' elements called radicals \cite{lam2013first} such that, in general, matrix algebras cannot be decomposed into the direct sum of irreducible matrix algebras. 
Thus, a promising direction for future research is to generalize the current algorithm to find the finest SBD for matrices that are not necessarily self-adjoint.


\section*{Acknowledgments}
The authors thank Takanori Maehara, Young Sul Cho, Fedor Nazarov, and Jimmy Kim for insightful discussions.

\bibliographystyle{siamplain}
\bibliography{net_dyn}
\end{document}